\renewcommand{\mu}{\upmu}
\newcommand{\beginsupplement}{%
    \setcounter{table}{0}
    \renewcommand{\thetable}{S\arabic{table}}%
    \setcounter{figure}{0}
    \renewcommand{\thefigure}{S\arabic{figure}}%
    \renewcommand{\thesubsection}{\arabic{subsection}}
 }
\newtheorem{theorem}{\bf Theorem}
\begin{document}
\title{A quantum speedup algorithm for TSP based on quantum dynamic programming with very few qubits}

\author{Xujun Bai$^{1, 2, \dag}$}
\author{Yun Shang$^{1, 3, \ddag, *}$}

\affiliation{
$^1$Institute of Mathematics, Academy of Mathematics and Systems Science, Chinese Academy of Sciences, Beijing 100190, China \\
$^2$School of Mathematical Sciences, University of Chinese Academy of Sciences, Chinese Academy of Sciences, Beijing, 100049, China \\
$^3$State Key Laboratory of Mathematical Sciences, Academy of Mathematics and Systems Science, Chinese Academy of Sciences, Beijing, 100190, China \\
}

\let\thefootnote\relax\footnotetext{$^\dag$ baixujun@amss.ac.cn}
\let\thefootnote\relax\footnotetext{$^\ddag$ shangyun@amss.ac.cn}
\let\thefootnote\relax\footnotetext{$^*$ Corresponding authors }


\begin{abstract}
    The Traveling Salesman Problem (TSP) is a classical NP-hard problem that plays a crucial role in combinatorial optimization. In this paper, we are interested in the quantum search framework for the TSP because it has robust theoretical guarantees. However, we need to first search for all Hamiltonian cycles from a very large solution space, which greatly weakens the advantage of quantum search algorithms. To address this issue, one can first prepare a superposition state of all feasible solutions, and then amplify the amplitude of the optimal solution from it. We propose a quantum algorithm to generate the uniform superposition state of all $N$-length Hamiltonian cycles as an initial state within polynomial gate complexity based on pure quantum dynamic programming with very few ancillary qubits, which achieves exponential acceleration compared to the previous initial state preparation algorithm. As a result, we realized the theoretical minimum query complexity of quantum search algorithms for a general TSP. Compared to some algorithms that theoretically have lower query complexities but lack practical implementation solutions, our algorithm has feasible circuit implementation. Our work provides a meaningful research case on how to fully utilize the structures of specific problems to unleash the acceleration capability of the quantum search algorithms.
\end{abstract}

\keywords{quantum computing, Traveling Salesman Problem (TSP), quantum search, quantum dynamic programming}

\maketitle

\section{Introduction}

Quantum computing has the potential to surpass classical computing and bring about a computing revolution~\cite{montanaro2016quantum, Frank2019Quantum}. There are some quantum algorithms that have confirmed quantum superiority, such as Shor's integer factorization~\cite{shor1994algorithms}, Grover's unstructured database search~\cite{grover1996fast,boyer1998tight}, HHL algorithm for linear systems of equations~\cite{PhysRevLett.103.150502}. It is important to fully utilize the structures of specific problems to design algorithms that can leverage quantum advantages.

Combination optimization is considered one of the most promising fields for quantum advantages. Quantum algorithms for solving combinatorial optimization problems mainly include the quantum gate circuit and the quantum adiabatic algorithm~\cite{2000quant.ph..1106F}. The Grover's adaptive search (GAS) algorithm is a type of quantum gate circuit algorithm that provides quadratic speedup with guaranteed success probability~\cite{nielsen2002quantum}. The quantum approximate optimization algorithm (QAOA)~\cite{Farhi2014AQA} is also a promising quantum gate circuit algorithm that evolved from the quantum adiabatic algorithm.
The quantum adiabatic algorithm and the annealed quantum algorithm~\cite{PhysRevE.70.057701, RevModPhys.80.1061} are two technologies closely related. The latter is a quantum-inspired optimization method that uses quantum tunneling to search for local minima. All three methods can use the Ising model and its extensions to solve combinatorial optimization problems. That is because the constraints and objective functions for most combinatorial optimization problems can be modeled with binary variables, which usually adopt forms such as quadratic unconstrained binary optimization (QUBO)~\cite{Lucas_2014} or higher-order unconstrained binary optimization (HOBO)~\cite{ramezani2024reducingnumberqubitsn2}. By relating binary variables to spins, these problems can be mapped to the Ising system and its extensions, and their solutions are associated with the ground states of these systems.
 
The Traveling Salesman Problem (TSP) is a well-known NP-hard combinatorial optimization problem, which requires a salesman to find routes with the lowest cost to traverse each city exactly once. Classical algorithms, such as exact methods~\cite{laporte1992traveling, chauhan2012survey}, approximation methods~\cite{christofides1976worst}, heuristics methods~\cite{helsgaun2000effective, johnson1990local}, and AI-based methods~\cite{bengio2021machine, lombardi2018boosting}, cannot effectively solve the TSP. Hence, researchers have started to use quantum algorithms to solve this problem.
Ambainis et al. combined Grover’s search by computing a partial dynamic programming table and proposed an algorithm to solve the TSP with a theoretical query complexity of $O^*(1.728^N)$, which is superior to the best classical algorithms~\cite{doi:10.1137/1.9781611975482.107}. However, implementing this hybrid algorithm using specific quantum circuits while maintaining the same complexity is a challenging problem.
To overcome the problem of high dimensionality of the Hilbert space in the physical system of the Ising model, Vargas-Calder\'{o}n et al. utilized a system of many-qudits to model the TSP and adopted variational Monte Carlo (VMC) to optimize a neural quantum state (NQS) to reach its ground state~\cite{Vargas_Calder_n_2021}.
He used graph neural networks (GNN) to learn the representation of the graph structure and built a loss function based on the QUBO model to solve the TSP. He demonstrated that QUBO-based Quantum Annealing can enhance the GNN framework to address complex combinatorial optimization problems~\cite{He_2024}.
Ramezani et al. improved QAOA by reducing the qubit count from $N^2$ to $N\log N$~\cite{ramezani2024reducingnumberqubitsn2}.
QUBO needs to introduce slack variables to deal with inequality constraints, which increases the number of qubits and operations. To overcome this problem, a method using unbalanced penalization functions to encode inequality constraints has been proposed~\cite{Monta_ez_Barrera_2024}.
Goldsmith and Day-Evans proposed a new formulation that is different from QUBO and HOBO to reduce the use of binary variables and avoid the penalty term. They showed on a quantum boson sampler that larger networks can be solved with this penalty-free formulation than with formulations with penalties~\cite{goldsmith2024qubohoboformulationssolving}.
Tensor networks have also been used to design quantum-inspired algorithms for solving the TSP~\cite{ali2024travelingsalesmanproblemtensor}.
The Path-Slicing Strategy was used to decompose a TSP into manageable subproblems followed by quantum optimization~\cite{liu2024quantumlocalsearchtraveling}. This hybrid quantum-classical approach is an effective attempt to address the challenging nature of TSP computing.
Goswami et al. presented an interesting single-qubit method for the TSP combined with optimal control methods. They encoded cities in a two-dimensional plane onto a Bloch sphere and then used optimal control methods to selectively create quantum superposition states to find the shortest path~\cite{goswami2024solvingtravellingsalesmanproblem}.

In this study, we are interested in the quantum search framework for the TSP because it has robust theoretical guarantees. Two realizable algorithms in this direction can serve as references.
Zhu et al. designed and improved the Hamiltonian cycle detection (HCD) oracle to determine valid cycles using quantum circuits. Combining the HCD oracle and the quantum phase estimation, they proposed a quantum algorithm for TSP based on GAS with a query complexity of $O(\sqrt{2^{\lceil\log d\rceil N}})$, where $d$ is the maximum degree of the graph~\cite{zhu2022realizablegasbasedquantumalgorithm}. And this complexity is approximately equal to $O(\sqrt{N^N})$ for $d\sim O(N)$ under the statistical significance of random instances.
Some researchers~\cite{sato2024circuitdesigntwostepquantum} realized the importance of preparing the initial state or the superposition state encoding the feasible solutions. They proposed a two-step quantum search (TSQS) algorithm based on higher-order unconstrained binary optimization (HOBO) representation. The first step amplifies the amplitude of feasible solutions to prepare a uniform superposition state, from which the second step determines the optimal solution. The query complexity of the entire algorithm is $O(\sqrt{N!})$, which is close to the minimum query complexity of quantum search algorithms for a general TSP. However, they need the query complexity of $O(\sqrt{2^{N\log N}/N!})\approx O(e^{N/2}/N^{1/4})$ (according to Sterling's approximation) to prepare the initial state.

In this article, we show that the HCD oracle is an unnecessary design because we can exactly generate the uniform superposition state of all $N$-length Hamiltonian circles (HCs) within polynomial gate complexity based on pure quantum dynamic programming with only $\log N$ ancillary qubits, which we call HC-generation algorithm. This means that we only need to find the circle with minimum weight in a smaller space composed of all HCs, resulting in a lower query complexity of $O(\sqrt{(N - 1)!})$ which is the lowest query complexity of the pure quantum search algorithm for a general $N$-node TSP. Additionally, we employed a shortcut of Quantum Fourier Transform ($QFT$) to calculate the weights of HCs, which makes it simpler to implement quantum circuits than quantum phase estimation by avoiding the construction of a unitary matrix with the eigenvalues corresponding to the weights of HCs. This shortcut was proposed in~\cite{Gilliam_2021CPBO} to construct efficient oracles for solving constrained polynomial binary optimization (CPBO) problems based on GAS, while we use the shortcut to design efficient oracles for the TSP. Finally, we compressed the total number of qubits greatly. Although our HC-generation algorithm requires $\log N$ ancillary qubits, we can alternately use the value registers to compute the weights and act as auxiliary registers by exploiting the reversibility of quantum computing so that our algorithm does not need ancillary qubits explicitly. We significantly improved the quantum search algorithm for the TSP and successfully implemented our algorithm using IBM’s Qiskit. Under the optimal number of iterations we could obtain almost 100\% accuracy on examples of 4-8 nodes TSP where sampling shots equaled 1000. For example, we only need 30 qubits to solve an eight-node TSP with six optimal solutions, and the success rate is 99.7\%.

\section{Methods}

    The classical $N$-node TSP is described as follows. Each node $i\ (0\le i \le N-1)$ has $d_i\ (2\le d_i \le N-1)$ edges connected to other nodes. There is at most one edge that connects any two nodes. 
    The solution is a Hamiltonian cycle (HC) with the smallest total weight. (The total weight refers to the sum of the weights of all edges in an HC, which will be the same in the following text.)
    The TSP graph is complete when $d_i=N-1, \forall i$. If a TSP graph is incomplete, we can complete it by adding some edges with sufficiently large weights. Moreover, we can transform the Hamiltonian cycle problem into a complete TSP in a similar manner.
    However, considering that it requires too many qubits to store large weights, we can refer to the encoding method from the literature \cite{zhu2022realizablegasbasedquantumalgorithm}, in which the authors used some techniques to encode sparse TSP, such as adjacency lists, quantum-addressed quantum registers (QAQR) and quantum-addressed classical registers (QACR). In our study, without loss of generality, for simplicity, we only consider the complete TSP directly.
    
    There are two fundamental difficulties in designing quantum algorithms for TSP based on GAS. One is the feasibility of the solution given by the quantum algorithms. The other is that the quantum algorithms require a large number of qubits, which currently cannot be achieved in the NISQ era. 
    Our main work is divided into two parts. 
    (1) Design algorithm to exactly generate the uniform superposition state of all $N$-length HCs within polynomial gate complexity based on quantum dynamic programming.
    (2) Employ a shortcut of $QFT$ to calculate the weights of HCs.
    Then, we can extract the desired solution from a uniform superposition state of all $N$-length HCs using standard Grover's procedure.
    After taking the first step, we will overcome the first difficulty and weaken the second difficulty, and transform a TSP into a problem of finding the minimum values, which has been thoroughly studied~\cite{chen2020maxORmin, durr1999quantumalgorithmfindingminimum, Boyer_1998}.
    When setting the appropriate number of iterations, we can obtain an optimal solution with high probability.

\subsection{TSP Encoding}
    We divide qubits into index registers and value registers, denoted separately as $\ket{index}$ and $\ket{value}$. The index registers encode possible solutions.
    For $N$-node TSP, we use $N$ sets of qubits with $m=\lceil\log N\rceil$ qubits per set as index registers. The $i$-th set of qubits encodes the node which the traveling salesman will reach from node $i$, and we always let the traveling salesman start from node $0$. Take a $4$-node TSP as example, the quantum state $\ket{2031}$ represents the tour route $0\to 2\to 3\to 1\to 0$ because the 0-th set of qubits is in state $\ket{2}$, the 2-th set of qubits is in state $\ket{3}$, etc. It is clear that a quantum state encoding a feasible solution should represent a $N$-cycle as an element of $N$-order permutation group. There are $(N-1)!$ cycles in $N$-order permutation group. Therefore, the lowest query complexity of the pure quantum search algorithm for general $N$-node TSP is $O(\sqrt{(N-1)!})$. (Of course, one may apply certain special weight structures such as Euclidean distance to improve this complexity, but in this work we research general situation.) Under the current encoding methods, if we start from the uniform superposition state on the entire encoding space directly, the query complexity will be $O(\sqrt{2^{N\lceil\log N\rceil}})=O(\sqrt{N^N})$, which makes it important to compress the search space through initial state preparation.
    The value registers encode information about the threshold and total weights of the possible tour routes. We adopt the form of the complement code, which means there is a sign bit in the value registers.
    Restricted by computing power, we set the number of qubits in the value registers to 5 for 4-7 nodes TSP and 6 for 8-node TSP. (Because of the greater weight caused by the longer length of HC for eight-node TSP, we are forced to add an additional qubit.) For convenience in simulation, we limit the weight of each edge to a positive integer. We will explain later that this restriction does not affect generality.

    Under the above encoding way, index registers need $Nm=N\lceil\log N\rceil$ qubits and value registers need $M=\lceil\log C\rceil+1$ qubits where $C$ is the maximum total weight of the possible tour route and adding one is due to the sign bit. (The edges weights can also be rescaled according to specific conditions.) Our algorithm for the TSP occupy $N\lceil\log N\rceil+\lceil\log C\rceil+1$ qubits in total, no extra ancillary qubits!

\subsection{Framework}

    Next, we sketch the components of our quantum algorithm for the TSP. It consists of three major parts (see also {\bf Fig.\ref{fig:top_design1}} and {\bf Fig.\ref{fig:top_design2}} for quantum circuits):
    \begin{enumerate}
        \item Encoding operators. All registers $\ket{index}\ket{value}$, as a group of qubits, are initialized to the uniform superposition state of all HCs with corresponding total weights by these operators, which are showed in {\bf Fig.\ref{fig:top_design1}}. $HCg$-gate can generate the uniform superposition state of all HCs as $\ket{index}$. $H^{\otimes M},\ U_{w-C_{T}}$ and inverse Quantum Fourier Transform $QFT^{\dag}$ can generate the state $\ket{value}$ according to $\ket{index}$, where $value=total\ weight\ of\ a\ HC-threshold\ C_{T}$. Hence, the output state of the circuit in {\bf Fig.\ref{fig:top_design1}} is 
        \begin{equation*}
            \frac{1}{\sqrt{(N-1)!}}\sum_{\sigma\in \{HC\}}\ket{\sigma}\ket{w_{\sigma}-C_{T}}.
        \end{equation*}
        
        \item Oracle operators. Label the states whose total weights are less than threshold $C_{T}$. In {\bf Fig.\ref{fig:top_design2}}, the $Z$-gate acting on the last qubit (i.e., sign bit) can mark negative values, that is, label the HCs whose total weights are below the threshold $C_{T}$.
        
        \item Diffusion operators. See the part behind the $Z$-gate in {\bf Fig.\ref{fig:top_design2}}; $QFT,\ U^{\dag}_{w-c_{T}}$ and $H^{\otimes M}$ can release the value registers to assist the $HCg^{\dag}$-gate, thereby our whole algorithm does not need extra auxiliary registers. 

        The part inside the dashed box in {\bf Fig.\ref{fig:top_design2}} is standard Grover's diffusion operator $HCg(I-2|0\rangle\langle0|)HCg^{\dag}$ where $H^{\otimes mN}$ is replaced with $HCg$-gate, after which we recover the value registers by $H^{\otimes M},\ U_{w-c_{T}}$ and $QFT^{\dag}$ in order to continue executing subsequent Grover's iterations.
    \end{enumerate}

    \begin{figure*}
        \centering
        \includegraphics[width=.5\linewidth]{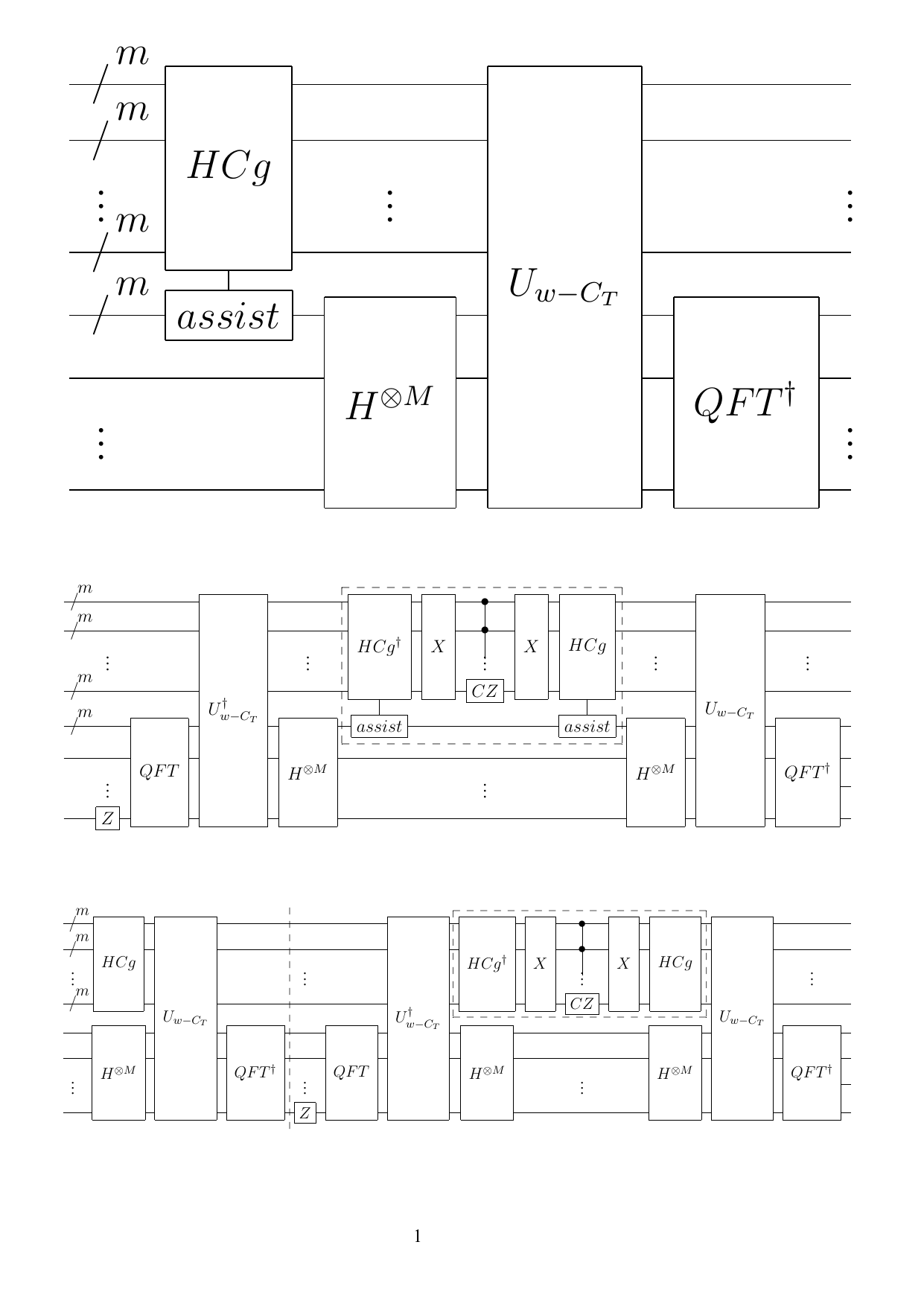}

        \caption{Illustration of the quantum circuit of our algorithm's encoding part. The input state is $\ket{0}^{\otimes (mN+M)}$, $HCg$-gate represents the HC-generation algorithm with ``assist" denoting its part on the auxiliary qubits, and $m=\lceil\log N\rceil,\ M=\lceil\log C\rceil+1$, where $C$ is maximum total weight of the possible tour route corresponding to an HC. The principles of the operators $HCg$-$gate$ and $U_{w-C_{T}}$ in the figure will be explained later.}
        \label{fig:top_design1}
    \end{figure*}

    \begin{figure*}
        \centering
        \includegraphics[width=1.\linewidth]{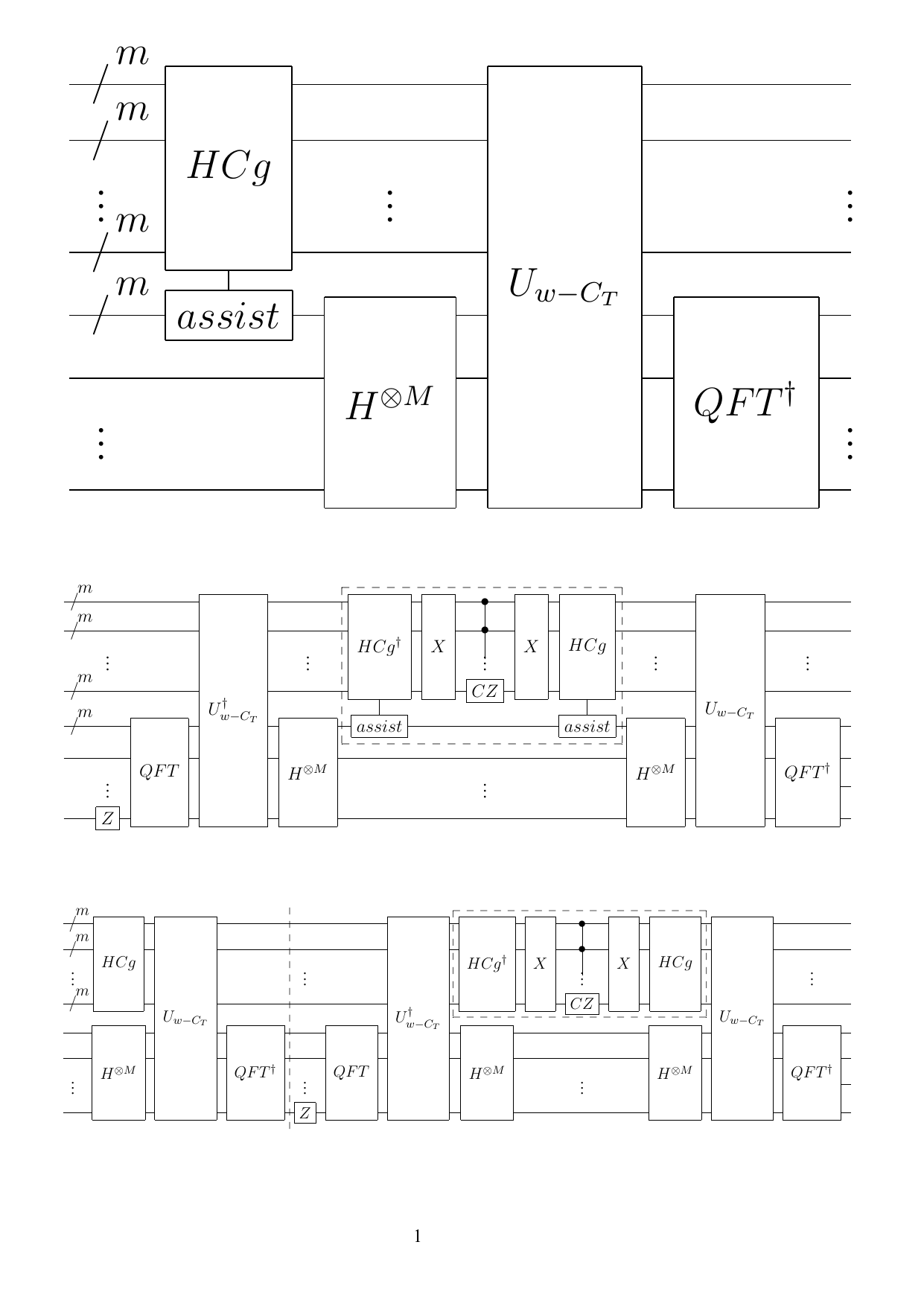}

        \caption{Illustration of the quantum circuit of our algorithm's searching module, which will be executed many times to amplify the amplitude of the optimal solution.}
        \label{fig:top_design2}
    \end{figure*}

\subsection{HC-generation algorithm}
\begin{figure*}
        \centering
        \includegraphics[width=.5\linewidth]{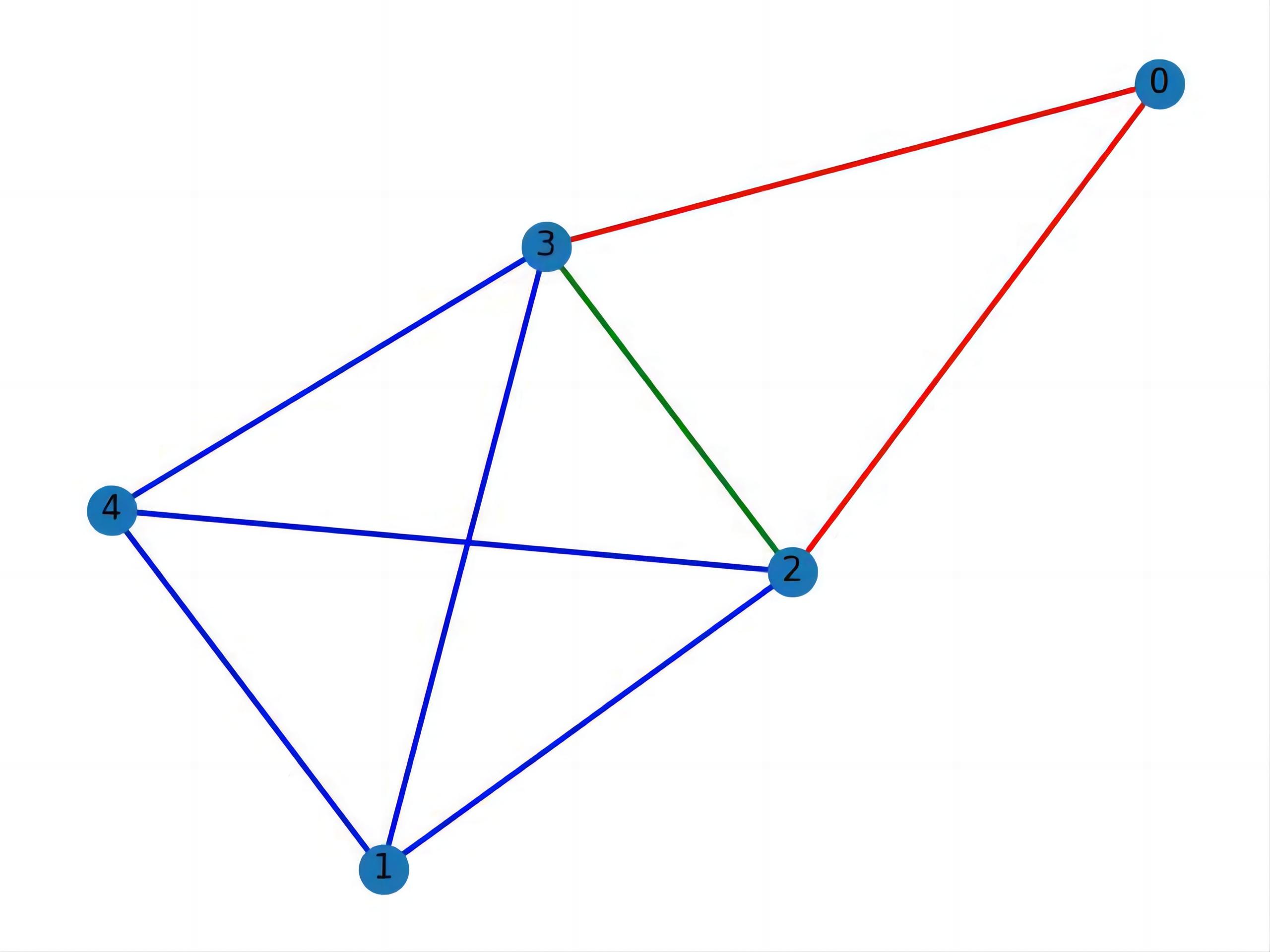}

        \caption{Obtain an HC of length 5 from an HC of length 4.}
        \label{fig:graph_example}
    \end{figure*}
    
    Firstly, we explain the motivation for designing this algorithm. When researchers used the GAS framework to solve combinatorial optimization problems in the past, they might have hoped to start from the uniform superposition state generated by the Hadamard-gate to encode the solution space. However, feasible solutions are always scattered in the encoding space, which drives researchers to design discriminant oracles to identify the valid solutions~\cite{zhu2022realizablegasbasedquantumalgorithm}. Thus, it makes sense to generate the uniform superposition state of all feasible solutions directly. It is easy to prove that preparing the uniform superposition state of all feasible solutions can limit Grover's quantum search to a subspace stretched by orthogonal basis vectors corresponding to feasible solutions. Taking the TSP as example, we can think about preparing a uniform superposition state of all permutations of order $N$. One way to achieve this goal is to construct an $indexing\ function$ that efficiently identifies each combinatorial object (i.e. permutation of order $N$) with a unique integer index. Using this function, we can unitarily map the $N!$ binary strings corresponding to permutations, which are scattered in a larger space of $2^{N\lceil\log N\rceil}$ binary strings according to our TSP encoding, to a canonical subspace of the first $N!$ binary strings (i.e., 0 through to $N!-1$). This is done using the so-called ``indexing unitary" $U_{\#}$. Once constructed $U_{\#}$, we can first prepare a uniform superposition state of 0 through to $N!-1$ by applying a Hadamard-gate and then obtain the uniform superposition state of all permutations of order $N$ by applying $U^{\dag}_{\#}$. In fact, someone has already accomplished it \cite{MYRVOLD2001281, Marsh_2020, chiew2018graphcomparisonnonlinearquantum}.

    In \cite{chiew2018graphcomparisonnonlinearquantum}, the authors consume many auxiliary registers to prepare a uniform superposition state of all permutations of order $N$. Their method essentially utilizes an $indexing$ $function$. Using the $indexing$ $function$ to prepare the superposition state is somewhat similar to encryption and decryption calculations, which require a considerable amount of computing resources such as many ancillary qubits. However, we will show that it is easier to prepare a uniform superposition state for all $N$-length HCs. Especially, we do not require an $indexing\ function$, thus avoiding the excessive occupation of auxiliary qubits.

    In classical computing, dynamic programming is based on the overlapping substructure and optimal substructure properties. We first consider HC as the optimal property we want to maintain, and then provide the classical recursive algorithm for enumerating $N$-length HCs. Finally, we utilize the property of the quantum superposition state to serve as the overlapping substructure, which achieves parallel quantum acceleration, and implement this quantum recursive process using quantum circuits. Specifically, we provide two theorems to ensure the implementation of the $HCg$-gate.

    \hspace*{\fill}
    
    \begin{theorem}[HC Recursively Enumeration Theorem]  \label{thm_HC_enumeration}
        There exists a recursive algorithm that starts from 2-length HC to enumerate all $N$-length HCs.
    \end{theorem}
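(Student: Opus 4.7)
The plan is to proceed by induction on the cycle length $k$, together with an explicit node-insertion rule that lifts every $k$-length HC to exactly $k$ distinct $(k+1)$-length HCs. The base case $k=2$ is immediate: on $\{0,1\}$, adopting the convention (consistent with the paper's encoding) that every HC starts and ends at node $0$, the unique 2-length HC is $0\to 1\to 0$. For the inductive step, assume all $(k-1)!$ HCs of length $k$ on $\{0,1,\ldots,k-1\}$ have been enumerated. Given one such HC $C = (0, v_1, \ldots, v_{k-1}, 0)$, I would produce $k$ new HCs of length $k+1$ on $\{0,1,\ldots,k\}$ by splicing the new node $k$ into each of the $k$ edges of $C$ in turn, i.e., replacing the edge $v_i \to v_{i+1}$ (indices taken cyclically, with $v_0 = v_k = 0$) by the two-edge path $v_i \to k \to v_{i+1}$, as illustrated in Fig.~\ref{fig:graph_example}.

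Correctness of the recursion then reduces to showing that the insertion map $(C, i) \mapsto C_i$ is a bijection from $\{k\text{-length HCs on }\{0,\ldots,k-1\}\} \times \{0,\ldots,k-1\}$ onto $\{(k+1)\text{-length HCs on }\{0,\ldots,k\}\}$. Injectivity is clear because in $C_i$ the new node $k$ together with its two neighbors pinpoints both the position $i$ and, by contracting the two edges incident to $k$, the original $C$. Surjectivity is the same contraction read in reverse: any $(k+1)$-length HC contains node $k$ between two uniquely determined neighbors, and deleting $k$ while merging those two edges yields a valid $k$-length HC that maps back to the original under an appropriate insertion. The count $k \cdot (k-1)! = k!$ confirms that no duplications or omissions occur. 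Iterating from $k=2$ up to $k=N$ then produces the claimed recursive enumeration.

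The main obstacle I anticipate is not any deep subtlety but the bookkeeping required to keep the "rooted at node $0$" convention consistent through each insertion, especially because the paper's index-register encoding $\ket{v_0 v_1 \cdots v_{N-1}}$ stores the \emph{successor} of each node rather than listing the cycle in order. Phrasing the insertion step directly in this successor representation—so that the subsequent quantum realization of the $HCg$-gate is a transparent lifting of the classical recursion rather than a re-derivation—is the step that needs genuine care. Verifying that each spliced cycle is indeed Hamiltonian and that insertion and contraction are mutual inverses is then routine, so the whole proof can be kept short.
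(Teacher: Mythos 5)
Your proposal is correct and follows essentially the same route as the paper: lift each $k$-length HC to $k$ new HCs by splicing the new vertex into each edge, and conclude distinctness/completeness from the fact that the $(k-1)!\cdot k = k!$ generated cycles are pairwise distinct and exhaust all HCs of the next length. The only differences are cosmetic — the paper adds vertices in decreasing label order (so that $0$ is inserted last, matching the circuit construction) and argues surjectivity by counting rather than by your explicit contraction inverse.
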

    
    \begin{proof}
        Suppose that we have an HC of length $N-1$, denoted by $C$. Then, we add a new vertex $\omega$ and remove any edge of $C$, denoted by $u\to v$. By adding two new edges $u\to\omega$ and $\omega\to v$ we obtain an HC of length $N$, denoted by $\widetilde{C}$. There are $(N-2)!$ HCs of length $N-1$, each of which can generate $N-1$ different HCs by removing different edges, so they can generate $(N-1)!$ different HCs totally (because any two different HCs of the same length must have at least two different edges).
        
        Let us take the 4-length generating 5-length case as an example (see {\bf Fig.~\ref{fig:graph_example}}). We have a graph with four vertexes 1-4. For HC $1\to4\to3\to2\to1$ corresponding to permutation $\sigma=(4123)$, we can remove the green edge and add red edges maintaining orientation to obtain HC $0\to2\to1\to4\to3\to0$ corresponding to permutation $\sigma=(24103)$.

        Now, we can start from 2-length HC $(N-2)\to (N-1)\to (N-2)$ corresponding to permutation $\sigma=(N-1,N-2)$ to generate all $N$-length HCs, where in order to facilitate the achievement of quantum circuits later the last vertex added is 0.
    \end{proof}

    \hspace*{\fill}

    The above enumeration process has a terrible time complexity of $O((N-1)!)$, which makes it a NP-hard problem to find all valid HCs~\cite{akiyama1980np}. However, based on the proof of {\bf Theorem~\ref{thm_HC_enumeration}}, we can provide a quantum algorithm with polynomial gate complexity as follows.

    \hspace*{\fill}
    
    \begin{theorem}[Quantum HC-generation Theorem]  \label{thm_QDP_HC}
        There exists a quantum version of the above enumeration process that can generate the uniform superposition state of all HCs within polynomial gate complexity.
    \end{theorem}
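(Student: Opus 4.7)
The plan is to quantize the classical recursion from Theorem~\ref{thm_HC_enumeration}. For each $k = 2, 3, \ldots, N$, I would define a target state $\ket{\Phi_k}$ as the uniform superposition over all $(k-1)!$ HCs of length $k$, encoded in the index registers with the sets corresponding to not-yet-inserted vertices held in $\ket{0}$. The goal is then to construct a unitary $U_k$ such that
\begin{equation*}
    U_k\,\ket{\Phi_k}\ket{0}_{\mathrm{anc}} \;=\; \ket{\Phi_{k+1}}\ket{0}_{\mathrm{anc}}
\end{equation*}
using $\mathrm{poly}(N)$ elementary gates and $O(\log N)$ ancillary qubits. Starting from the trivial product state encoding the 2-length HC $(N{-}2)\to(N{-}1)\to(N{-}2)$ and composing $U_{N-1}\cdots U_3 U_2$ would then produce $\ket{\Phi_N}$, which is exactly the output promised by the $HCg$-gate.

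Each $U_k$ inserts the new vertex $\omega = N-k-1$ in three sub-steps that mirror the surgery in the proof of Theorem~\ref{thm_HC_enumeration}. Step (a): on a $\lceil\log k\rceil$-qubit ancilla, prepare the uniform superposition $\tfrac{1}{\sqrt k}\sum_{u \in V_k}\ket{u}$ over the $k$ vertices currently in the cycle; this is a uniform superposition on a $k$-element subset of basis states and can be realized by standard $O(\log k)$-gate constructions even when $k$ is not a power of two. Step (b): controlled on the ancilla holding $u$, read $v := \mathrm{set}[u]$ into a scratch register, overwrite $\mathrm{set}[u] \leftarrow \omega$, and write $\mathrm{set}[\omega] \leftarrow v$. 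Each of these controlled register accesses is implemented as a cascade of controlled swaps iterating over the $N$ possible index values of $u$, costing $O(N \log N)$ elementary gates. Step (c): uncompute the ancilla by using the observation that in the resulting HC, $u$ is the unique vertex whose image equals $\omega$, so $u$ can be recomputed from the index registers and XORed back into the ancilla to clear it; the scratch register is cleared analogously.

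Correctness reduces immediately to the bijection in Theorem~\ref{thm_HC_enumeration}: the map $(\sigma, u) \mapsto \widetilde{\sigma}$ defined by the edge surgery is a one-to-one correspondence between $\mathrm{HC}_k \times V_k$ and $\mathrm{HC}_{k+1}$, so the coherent sum over equally-weighted $(\sigma, u)$ produces the uniform superposition over $\mathrm{HC}_{k+1}$ with no destructive interference. The per-stage gate cost is $O(N\log N)$, dominated by the controlled read/write, and there are $N-2$ stages, giving $O(N^2 \log N)$ total gates. The ancillas from stage to stage are reused, so only $O(\log N)$ ancillary qubits are needed throughout.

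I expect the main obstacle to be two coupled points. The first is the index-controlled read/write in step (b), which is a quantum-indexed access into an array of $N$ registers; a naive implementation would blow up to $\Theta(N^2)$ or worse, so one has to be careful that the cascade-of-controlled-swaps construction indeed composes to a legitimate unitary that acts as advertised on every branch of the superposition. The second is the uncomputation in step (c), which depends on the structural invariant that $\omega$ has a unique predecessor in any HC and that all ``not-yet-inserted'' sets are held in $\ket{0}$; I would need to verify that this invariant is preserved by every $U_k$, since without it the reverse lookup for $u$ would not be well-defined and the ancilla could not be cleanly returned, which would in turn spoil the uniform-superposition form of $\ket{\Phi_{k+1}}$.
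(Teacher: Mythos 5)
Your proposal is correct and follows essentially the same route as the paper: quantize the recursion of Theorem~\ref{thm_HC_enumeration} into $N-2$ stage unitaries, put a uniform superposition over the choice of cut edge into an $O(\log N)$-qubit ancilla, perform the insertion coherently, and uncompute the ancilla using the fact that the newly inserted vertex $\omega$ has a unique predecessor, with the bijection $\mathrm{HC}_k\times V_k\to\mathrm{HC}_{k+1}$ guaranteeing no interference. The only differences are implementation-level: the paper superposes over the successor $v=\sigma(u)$ and lets that register itself become the new index register (avoiding your scratch register and reverse lookup), and it prepares the non-power-of-two uniform superposition by exact amplitude amplification, which is what drives its $O(N^{5/2})$ gate count versus your (plausibly but unsubstantiatedly) cheaper $O(N^2\log N)$ --- either way polynomial, as the theorem requires.
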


    \begin{proof}
        We first show the enumeration process in the proof of {\bf Theorem~\ref{thm_HC_enumeration}} using our TSP encoding. Suppose we have an HC of length $N-k$, denoted by $\ket{\sigma}$ corresponding to a permutation $\sigma$ acting on the set $V=\{k, k+1, \dots, N-1\}$. We perform the following steps:
        \begin{enumerate}
            \item Add en element $k-1$ to $V$, which produces a new set $\widetilde{V}=\{k-1, k, k+1, \dots, N-1\}$ and a new permutation $\widetilde{\sigma}=(k-1,\sigma)$.
            \item Swap $k-1$ and each digit in $\sigma$ in order from small to large, which generates $N-k$ new HCs of length $N-k+1$.
        \end{enumerate}
        
        By performing the above steps for each HC of length $N-k$, we obtain $(N-k-1)!\times (N-k)=(N-k)!$ new HCs, that is, all $(N-k+1)$-length HCs. We can easily confirm that these steps are equivalent to the enumeration process in the proof of {\bf Theorem~\ref{thm_HC_enumeration}}.
        
        \begin{table}[h]
        \centering
            \begin{threeparttable}[b]
            \caption{A simple example of HC-generation.}
            \label{tab:QDP_HCg_example}
                \begin{tabular}{c|c|c|c}
                    \toprule[1.5pt]
                        Order & $N=3$ & $N=4$ & $N=5$ \\ \hline 
                        Identity permutation & 234 & 1234 & 01234 \\ \hline 
                        \multirow{9}{*}{HC} & $\ket{342}$ & \textcolor{red}{$\ket{2341}$} & \textcolor{red}{$\ket{12340}$} \\
                        & $\ket{423}$ & \textcolor{red}{$\ket{3142}$} & \textcolor{red}{$\ket{20341}$} \\
                        & - & \textcolor{red}{$\ket{4312}$} & \textcolor{red}{$\ket{32041}$} \\
                        & - & $\ket{2413}$ & \textcolor{red}{$\ket{42301}$} \\
                        & - & $\ket{3421}$ & $\ket{13042}$ \\
                        & - & $\ket{4123}$ & $\ket{23140}$ \\
                        & - & - & $\ket{30142}$ \\
                        & - & - & $\ket{43102}$ \\
                        & - & - & $\cdots$ \\
                    \bottomrule[1.5pt]
                \end{tabular}
            \end{threeparttable}
        \end{table}

        For example, we start from 3-length HCs to generate 4-length and 5-length HCs in {\bf Table~\ref{tab:QDP_HCg_example}}. When $N=3$, we perform the above two steps for $\ket{342}$ to generate the states marked in red in the column with $N=4$. When $N=4$, we perform the above two steps for $\ket{2341}$ to generate four states marked in red in the column with $N=5$.

        In the following section, we provide a quantum implementation of the above process. It should be noted that although we mentioned ``swap" in step two, we do not actually need $Swap$-gates. Instead, we first prepare a uniform superposition state of integers not less than $k$. Consider an example of generating HCs of $N=5$ from HCs of $N=3$, where $k=5-3=2$. We should first generate HCs of $N=4$. Suppose we have the state (Refer to {\bf Table~\ref{tab:QDP_HCg_example}})
        \begin{equation*}
            \frac{1}{\sqrt{2}}\ket{0}(\ket{342}+\ket{423}),
        \end{equation*}
        then we prepare a uniform superposition state from $\ket{0}$ by Amplitude Amplification Method \cite{Brassard_2002} to get the state:
        \begin{align*}
            {} & \frac{1}{\sqrt{2}} \cdot \frac{1}{\sqrt{3}} (\ket{2}+\ket{3}+\ket{4})\cdot(\ket{342}+\ket{423}) \\
             = {} & \frac{1}{\sqrt{6}}(\ket{2}\ket{342}+\ket{3}\ket{342}+\ket{4}\ket{342}+\dots)
        \end{align*}

        Now, we need to match the string of numbers in the second register with the number in the first register, and replace the matched number in the string with 1. (In practice, for convenience we will always first replace it with 0 then replace 0 with the original number.) Take the second item on the right-hand side of the equation above as an example:
        \begin{equation}
            \ket{3}\ket{342} \rightarrow \ket{3}\ket{042} \rightarrow \ket{3142},
            \label{eqn:HCg_example}
        \end{equation}
        and corresponding quantum circuit is in {\bf Fig.\ref{fig:HC-generation}}, in which the first set of registers is ancillary register and $\ket{\psi}=\frac{1}{\sqrt{3}} (\ket{2}+\ket{3}+\ket{4})$. Each number is encoded in binary, therefore, each set of registers requires $\lceil\log 5\rceil=3$ qubits. In the figure, we omit the registers where the other numbers in $\ket{342}$ are located except for 3. The operations performed on these omitted registers are the same as those shown in {\bf Fig.\ref{fig:HC-generation}} except for module D.

        In the following, we explain the principles of the four modules in {\bf Fig.\ref{fig:HC-generation}} using (\ref{eqn:HCg_example}) as an example.

        \begin{figure*}
            \centering
            \includegraphics[width=.9\linewidth]{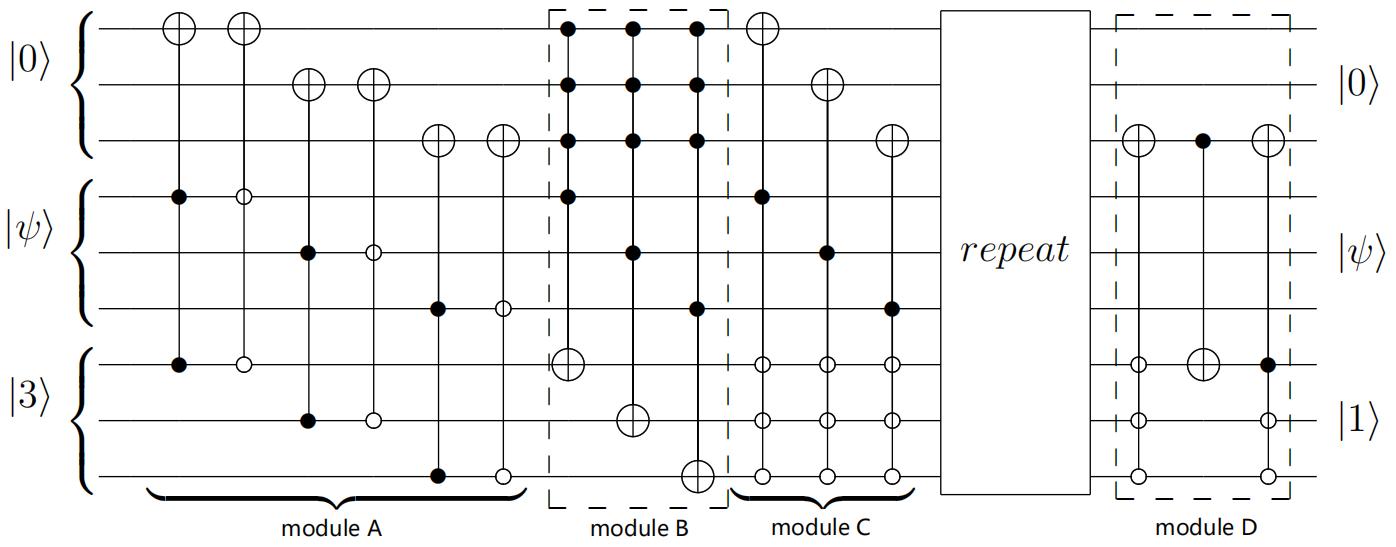}
    
            \caption{Partial quantum circuit of HC-generation algorithm, where the first set of qubits is auxiliary register initialized to $|0^{\log N}\rangle$. To ensure that there are always available qubits as ancillary register for each iteration, this algorithm requires additional $\log N$ auxiliary qubits except for the index registers.}
            \label{fig:HC-generation}
        \end{figure*}

        \begin{enumerate}
            \item Module A. The function is to match the second and third sets of registers. If the numbers in the second and third sets of registers are exactly the same, three qubits of the first set of registers will become $\ket{111}$.
            \item Module B. The function is to set the third set of registers to $\ket{0}$ by performing $XOR$ operation on the second and third sets of registers when the first set of registers is in state $\ket{111}$.
            \item Module C. The function is to free up the first set of registers. This module must be combined with the $repeat$-gate behind it. The $repeat$-gate is a simple repetition of module A. If the third set of registers is not in state $\ket{000}$, we only need to execute the $repeat$-gate to free up the first set of registers. However, if the third set of registers is in state $\ket{000}$, we have to discuss different situations separately: 
            
            Consider the first qubit's release as an example. If the fourth qubit in module C is $\ket{1}$, the operator performed on the first qubit in module A must be $CCX$-gate. But for $repeat$-gate, the same gate will not execute because the seventh qubit is $\ket{0}$. Hence, only the controlled $NOT$-gate in module C will act on the first qubit. 

            If the fourth qubit in module C is $\ket{0}$, the operator performed on the first qubit in module A must be $anti$-$C\ anti$-$C\ X$-gate. Thus, the $anti$-$C\ anti$-$C\ X$-gate in the $repeat$-gate will be executed, but the first controlled $NOT$-gate in module C will not.

            In summary, only one controlled $NOT$-gate in module C or $repeat$-gate will act on the first qubit, avoiding duplicate execution, which ensures the correct release of the first qubit. The other qubits in the ancillary register can be released in the same way.
            \item Module D. The function is to execute the second step in (\ref{eqn:HCg_example}), i.e., $\ket{042}\to \ket{142}$, which only needs an ancillary qubit. This module can be modified according to the number changing from 0, that is, modifying the target qubit of the second gate and the control qubit of the third gate in this module according to the binary sequence of the corresponding numbers.
        \end{enumerate}

        The functions of these four modules can be verified according to the corresponding quantum circuit and promote them to handle more nodes by expanding qubits. We also provide numerical simulation verification of 4-length and 5-length HCs in the supplementary materials.

        Finally, we analyze the gate complexity of the quantum algorithm described above.
        We can realize the Amplitude Amplification Method (AAM) using Grover algorithm with zero theoretical failure rate \cite{Long_2001}, where Grover's iteration is:

        \begin{equation}
            G_{k}(\phi,\varphi)=H^{\otimes m}S_0(\phi)H^{\otimes m}S_{\chi}(\varphi),\ m=\lceil\log N\rceil
            \label{eqn:AAM1}
        \end{equation}
        \begin{equation}
            S_0(\phi)\ket{j}=\left\{ \begin{aligned}  
                e^{i\phi}\ket{j},\ j=0\\
                \ket{j},\ j\neq0
            \end{aligned}\right.
            \label{eqn:AAM2}
        \end{equation}
        \begin{equation}
            S_{\chi}(\varphi)\ket{j}=\left\{ \begin{aligned}  
                e^{i\varphi}\ket{j},\ k\leq j < N\\
                \ket{j},\ 0\leq j < k
            \end{aligned}\right.
            \label{eqn:AAM3}
        \end{equation}
        
        We adopt three dimensional rotation form according to \cite{Long_2001}:
        \begin{gather}
            \ket{\psi}=G^n_k(\phi,\phi)H^{\otimes m}\ket{0}\\
            n=\lceil\frac{\pi}{4\arcsin{\sqrt{k/2^m}}}-\frac{1}{2}\rceil,\ \phi=2\arcsin{\frac{\sin{\frac{\pi}{4n+2}}}{\sqrt{k/2^m}}} \notag
            \label{eqn:AAM4}
        \end{gather}

        \begin{figure*}
            \centering
            \includegraphics[width=.7\linewidth]{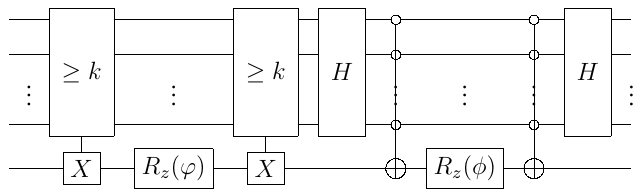}
    
            \caption{Quantum circuit of $G_{k}(\phi,\varphi)$, where the last qubit is ancillary register with $\ket{0}$ as input and output, and the ``$\geq k$" gate is an integer comparator that conditionally toggles the ancillary register if the input registers hold a value not less than $k$.}
            \label{fig:AAM}
        \end{figure*}
        
        Grover's iteration $G_{k}(\phi,\varphi)$ can be performed using the quantum circuit shown in {\bf Fig.\ref{fig:AAM}}. To realize the ``$\geq k$" gate, we can use a set of $MCX$ gates to match numbers not smaller than $k$. Hence, we need $O(N)$ $MCX$ gates, $O(\log N)$ $H$ gates and $2$ $RZ$ gates to execute $G_k(\phi,\phi)$; and $O(N\sqrt{N})$ $MCX$ gates, $O(\sqrt{N}\log N)$ $H$ gates and $2\sqrt{N}$ $RZ$ gates to execute the AAM. Next, we focus on the $MCX$ gates shown in {\bf Fig.\ref{fig:HC-generation}}, where each set of registers requires $\lceil\log N\rceil$ qubits for $N$-node situation, requiring $O(\log N)$ $MCX$ gates. Considering that there are $N$ sets of qubits ($\log N$ qubits per set) in the index registers, each recursive step has $O(N\log N+N\sqrt{N})$ $MCX$ gates (the second term is from the AAM), $O(\sqrt{N}\log N)$ $H$ gates and $2\sqrt{N}$ $RZ$ gates. The total number of recursive steps is $N-2$ (because the algorithm starts from 2-length HC to generate all $N$-length HCs). Therefore, our HC-generation algorithm requires $O(N^2\log N+N^2\sqrt{N})=O(N^{5/2})$ $MCX$ gates, $O(N^{3/2}\log N)$ $H$ gates and $O(N^{3/2})$ $RZ$ gates. The total gate complexity is $O(N^{5/2})$, which is a polynomial complexity.
        
    \end{proof}

    \hspace*{\fill}

    This theorem enables our algorithm's encoding part (see {\bf Fig.~\ref{fig:top_design1}}) to prepare a good initial state in the polynomial complexity, which achieves exponential acceleration compared to the previous initial state preparation algorithm (we will discuss it in the ``Discussion'' section).
    It should be noted that although our algorithm has polynomial complexity, this does not mean that we can achieve exponential acceleration compared to the classical algorithm because the preparation of the uniform superposition state of all HCs and enumerating all HCs are two different problems.

\subsection{Shortcut of $QFT$} 
    \cite{Gilliam_2021CPBO} provides a detailed introduction to GAS for Constrained Polynomial Binary Optimization (CPBO), which employs a shortcut of $QFT$. This idea can be extended to our problem.

    In the previous section, we realized the HC-generation algorithm as the $HCg$-gate in {\bf Fig.~\ref{fig:top_design1}} and {\bf Fig.~\ref{fig:top_design2}}. In the following, we will realize the $U_{w-C_{T}}$-gate to compute difference value between the total weight of an HC and the threshold $C_{T}$ according to the index registers $\ket{index}$.
    
    First, we can obtain the geometric sequence encoding of an integer $-2^{M-1}\leq k<2^{M-1}$ by applying the circuit in {\bf Fig.\ref{fig:GAS for CPBO}} to state $\ket{0^M}$.
    For $R$-gate can rotate the amplitudes of states having 1 in the position corresponding to the qubit it was applied to, the state obtained after the dashed box in {\bf Fig.\ref{fig:GAS for CPBO}} is $\frac{1}{\sqrt{2^M}}\sum^{2^M-1}_{j=0}e^{ij\theta}\ket{j}_M$, which represents a geometric sequence of length $2^M$.

    \begin{figure}[H]
        \centering
        \includegraphics[width=0.8\linewidth]{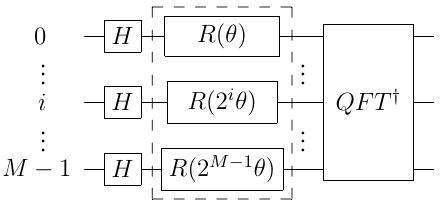}

        \caption{Quantum circuit for getting the geometric sequence encoding of an integer $-2^{M-1}\leq k<2^{M-1}$, where the part inside the dashed box is defined as $U_G(\theta),\ \theta\in [-\pi,\pi)$ and $R$-gate is phase gate.}
        \label{fig:GAS for CPBO}
    \end{figure}

    We know $QFT$ applied to a quantum state encoding a binary non-negative integer also creates a geometric sequence of amplitudes. Hence, if the encoded numbers are known classically, $U_G(\theta)$ can be regarded as a shortcut of the $QFT$ because there are no multi-qubit interactions in $U_G(\theta)$. Now we can set $\theta=2\pi k/2^M\ (-2^{M-1}\leq k<2^{M-1})$ to obtain $\ket{k\ mod\ 2^M}$ as the output state of the circuit in {\bf Fig.\ref{fig:GAS for CPBO}}.

    This representation of $k$ takes the number with the highest digit of one as a negative number, similar to the complementary codes in classical computer science.

    \cite{Gilliam_2021CPBO} use $U_G(\theta)$ controlled by boolean variables to solve CPBO problems. We can extend this method to our quantum algorithm for the TSP. Under our TSP encoding, every index register encodes the result $\sigma(i)$ produced after an HC as a permutation $\sigma$ acts on $i\in \{0,1,\dots,N-1\}$. To compute the total weight of an HC, we use these index registers as the control registers of $U_G(\theta)$:

     \begin{figure}[H]
        \centering
        \includegraphics[width=1.\linewidth]{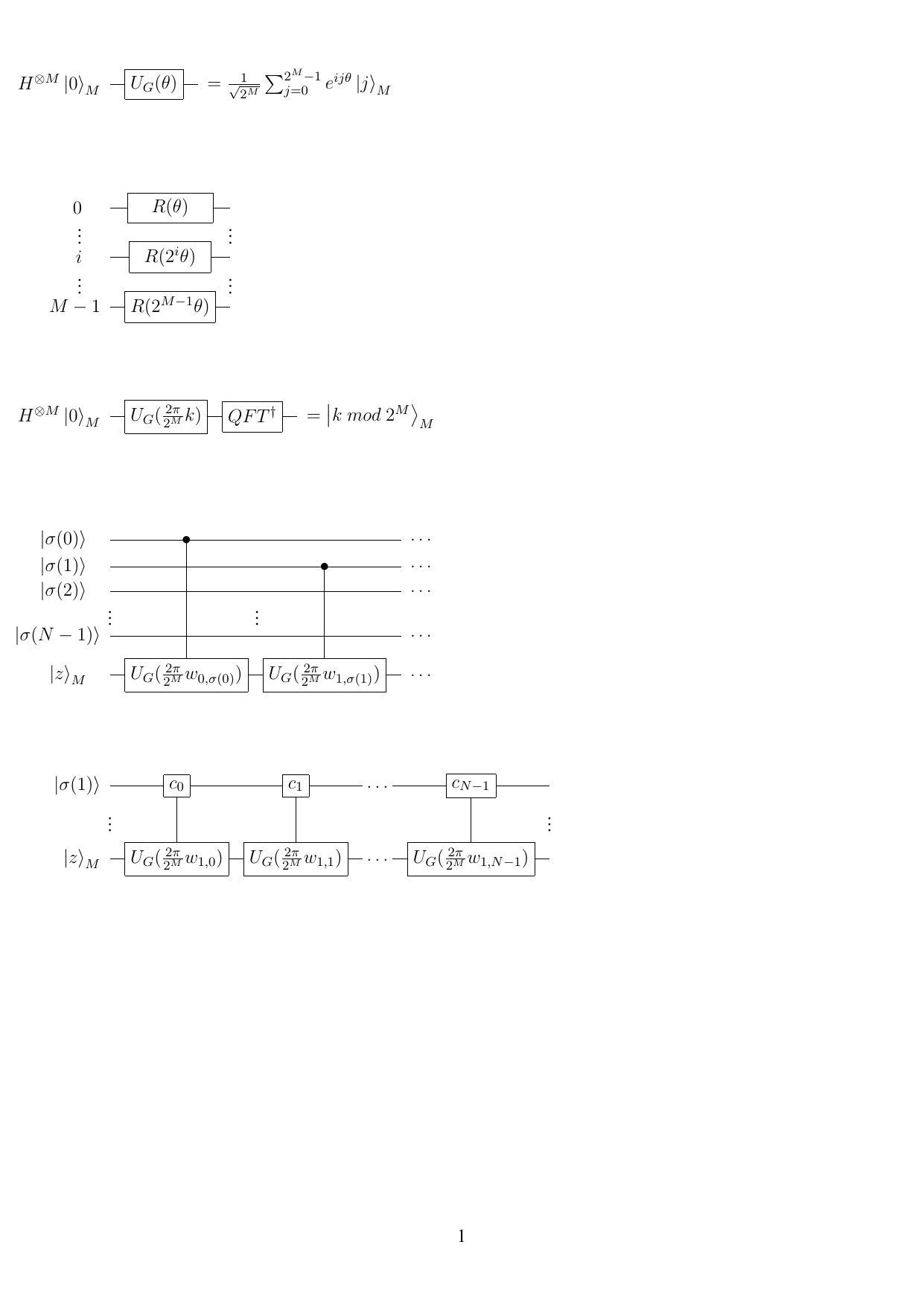}

        \caption{The controlled-$U_G(\theta)$ in our quantum algorithm for TSP, where $\ket{z}_M=H^{\otimes M}\ket{0}_M,\ w_{i,\sigma(i)}$ is the weight of the edge $(i,\sigma(i))$.}
        \label{fig:CPBO4}
    \end{figure}

   However, we do not know $\sigma(i)$ for the index registers are in the superposition state. Therefore, we need to prepare a controlled-$U_G(\theta)$ for every possible situation:

   \begin{figure}[H]
        \centering
        \includegraphics[width=1.\linewidth]{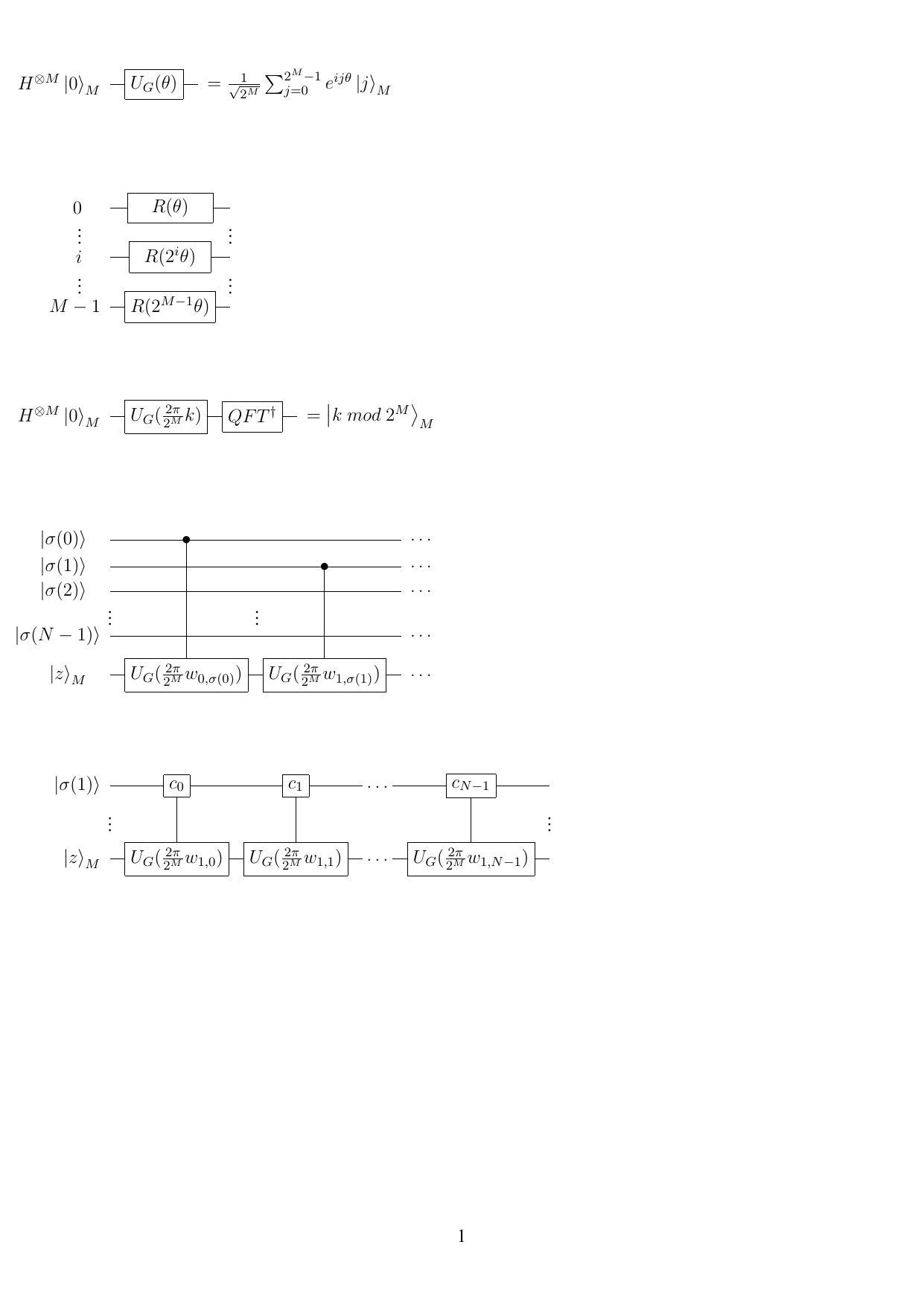}

        \caption{Refinement of the second controlled-$U_G(\theta)$ in {\bf Fig.~\ref{fig:CPBO4}}, where $c_0,c_1,\dots,c_{N-1}$ match integers $0,1,\dots,N-1$ according to binary expansion in sequence. For example, if $\sigma(1)=1$, the second gate $U_G(2\pi\omega_{1,1}/2^M)$ will execute.}
        \label{fig:CPBO5}
    \end{figure}

    Now we have realized $U_{w-C_{T}}$-gate in {\bf Fig.\ref{fig:top_design1}} and {\bf Fig.\ref{fig:top_design2}}, which requires $N^2\ U_G(\theta)$-gates. (An additional gate $U_G(-2\pi C_T/2^M)$ needs to be added at the end, then, we can obtain $\ket{w_\sigma-C_T\ mod\ 2^M}$ in the value registers where $\sigma$ is an HC.) We only need a $Z$-gate on the last qubit to judge the symbol of $w-C_T$.

    $U_G(\theta)$ can be seen as a shortcut of the $QFT$ because it avoids multi-qubit interactions. Recall that we must construct a unitary matrix that takes the weights of HCs as the phases of its eigenvalues if we use Quantum Phase Estimation (QPE) to estimate the weights of HCs. To execute this unitary matrix, we must recursively decompose it into simple basic operations~\cite{zhu2022realizablegasbasedquantumalgorithm}, which is not very easy. In contrast, our algorithm only requires simple controlled-$U_G(\theta)$, which can be directly executed using controlled phase gates. Consequently, we can complete the calculation of the value registers at a cost of $O(M^2+N^2M)$ ($M^2$ for $QFT^{\dag}$) controlled phase gates. Someone may question whether the above method can handle non-integer values. Hereto, relevant issues have been discussed in \cite{Gilliam_2021CPBO}, in which two methods are introduced to handle general real numbers, i.e., approximating real coefficients by fractions and encoding real coefficients as Fejér distributions.

    By now, one can understand the entire algorithm by combining {\bf Fig.\ref{fig:top_design1}}, {\bf Fig.\ref{fig:top_design2}} and the previous section ``framework".

\section{Results}
In this section, we implement our proposed algorithm for graphs with node numbers $N=4, 5, 6, 7, 8$ and provide the numerical results. We only consider complete graphs for one can always transform incomplete graphs into complete graphs by adding edges with sufficiently large weights. This method can also help to transform a Hamiltonian circle problem into a TSP. In the following, we present the results on seven examples. We take $M=5$ for 4-7 nodes TSP and $M=6$ for eight-node TSP. Specifically, the adjacency matrices of these seven graphs are:
    \begin{equation}
        \mathbf{X_1}=\left(
        \begin{array}{cccc}
            0 & 1 & 1 & 3 \\
            1 & 0 & 2 & 1 \\
            1 & 2 & 0 & 1 \\
            3 & 1 & 1 & 0 \\
        \end{array}\right),\
        \mathbf{X_2}=\left(
        \begin{array}{cccc}
            0 & 2 & 1 & 3 \\
            2 & 0 & 2 & 1 \\
            1 & 2 & 0 & 3 \\
            3 & 1 & 3 & 0 \\
        \end{array}\right)\
        \label{eqn:node4 adjacency matrices}
    \end{equation}

    \begin{equation}
        \mathbf{X_3}=\left(
        \begin{array}{ccccc}
            0 & 1 & 3 & 2 & 2 \\
            1 & 0 & 1 & 3 & 1 \\
            3 & 1 & 0 & 2 & 2 \\
            2 & 3 & 2 & 0 & 1 \\
            2 & 1 & 2 & 1 & 0 \\
        \end{array}\right),\
        \mathbf{X_4}=\left(
        \begin{array}{ccccc}
            0 & 1 & 3 & 2 & 1 \\
            1 & 0 & 1 & 3 & 1 \\
            3 & 1 & 0 & 2 & 3 \\
            2 & 3 & 2 & 0 & 1 \\
            1 & 1 & 3 & 1 & 0 \\
        \end{array}\right)\
        \label{eqn:node5 adjacency matrices}
    \end{equation}

    \begin{equation}
        \mathbf{X_5}=\left(
        \begin{array}{cccccc}
            0 & 1 & 3 & 2 & 2 & 1 \\
            1 & 0 & 1 & 3 & 1 & 2 \\
            3 & 1 & 0 & 3 & 2 & 3 \\
            2 & 3 & 3 & 0 & 1 & 1 \\
            2 & 1 & 2 & 1 & 0 & 3 \\
            1 & 2 & 3 & 1 & 3 & 0 \\
        \end{array}\right)\
        \label{eqn:node6 adjacency matrices}
    \end{equation}

    \begin{equation}
        \mathbf{X_6}=\left(
        \begin{array}{ccccccc}
            0 & 1 & 3 & 2 & 2 & 1 & 1 \\
            1 & 0 & 1 & 3 & 1 & 2 & 1 \\
            3 & 1 & 0 & 3 & 2 & 3 & 1 \\
            2 & 3 & 3 & 0 & 1 & 1 & 1 \\
            2 & 1 & 2 & 1 & 0 & 3 & 1 \\
            1 & 2 & 3 & 1 & 3 & 0 & 1 \\
            1 & 1 & 1 & 1 & 1 & 1 & 0 \\
        \end{array}\right)\
        \label{eqn:node7 adjacency matrices}
    \end{equation}

    \begin{equation}
        \mathbf{X_7}=\left(
        \begin{array}{cccccccc}
            0 & 1 & 3 & 2 & 1 & 1 & 2 & 1 \\
            1 & 0 & 1 & 3 & 1 & 2 & 1 & 2 \\
            3 & 1 & 0 & 1 & 2 & 2 & 1 & 3 \\
            2 & 3 & 1 & 0 & 1 & 2 & 1 & 1 \\
            1 & 1 & 2 & 1 & 0 & 3 & 1 & 3 \\
            1 & 2 & 2 & 2 & 3 & 0 & 1 & 2 \\
            2 & 1 & 1 & 1 & 1 & 1 & 0 & 1 \\
            1 & 2 & 3 & 1 & 3 & 2 & 1 & 0 \\
        \end{array}\right)\
        \label{eqn:node8 adjacency matrices}
    \end{equation}

    The same quantum circuit was executed for $shots=1000$ times to obtain statistical data of final quantum state $\ket{index}$, where although we measured both sets of registers $\ket{index}\ket{value}$ simultaneously in the end, we only extracted the measurement results of the first set. To confirm the effectiveness of the algorithm, we set the threshold $C_T=w_{\sigma^*}+1$ where $w_{\sigma^*}$ is the total weight of the optimal solution. Of course, in practice the optimal solution is unknown at the beginning, therefore, the threshold should be updated from an initial assumption value by executing the algorithm multiple times.
    \hspace*{\fill}

    {\bf Table~\ref{tab:sim}} shows our simulation results on IBM's qiskit, where ``qubits" denotes the required number of qubits, ``opt num" denotes the number of the optimal solution, ``iterations" denotes the optimal number of iterations, $x=\lceil\sqrt{(N-1)!/opt\ num}\rceil$ and ``accuracy" is computed from the proportion of the optimal solutions in 1000 samples. We also provide the bar charts in the supplementary materials, including two examples of our HC-generation algorithm.
    
    \begin{table}[H]
        \centering
        \begin{threeparttable}[b]
            \caption{Simulation Results} \label{tab:sim}
            \setlength{\tabcolsep}{2mm}
            \begin{tabular}{c c c c c c}
                \toprule[1.5pt]
                \     &  $N$ & qubits & opt num & iterations    & accuracy    \\ \hline
                $X_1$ &  4   & 13     & 2       & $5x+1$        & $100\%$     \\
                $X_2$ &  4   & 13     & 4       & $x$           & $99.8\%$    \\
                $X_3$ &  5   & 20     & 4       & $3x$          & $98.7\%$    \\
                $X_4$ &  5   & 20     & 2       & $3x+1$        & $99.9\%$    \\
                $X_5$ &  6   & 23     & 2       & $5x+2$        & $100\%$     \\
                $X_6$ &  7   & 26     & 4       & $5x+3$        & $99.9\%$    \\
                $X_7$ &  8   & 30     & 6       & $5x+13$        & $99.7\%$    \\
                \bottomrule[1.5pt]
            \end{tabular}
        \end{threeparttable}
    \end{table}

\section{Discussion}
We have transformed the TSP into a minimum value search problem, in which there are two problems. One is how to provide a good Grover's iteration count setting scheme when the number of the optimal solutions is unknown. The other is how to search for the minimum value by executing our algorithm multiple times to update the threshold $C_T$. These two problems can be solved by applying quantum exponential searching~\cite{durr1999quantumalgorithmfindingminimum, Boyer_1998}. After these improvements, the final algorithm still maintains the query complexity of $O(\sqrt{(N-1)!})$ (simply adding a constant factor). The complete algorithm steps, which are the general steps of Grover's algorithm for finding the minimum value (just change the number of feasible solutions), are as follows:

    \begin{enumerate}
        \item Randomly select an HC and take its total weight as the initial threshold. Initialize $l=1$ and set $\lambda=6/5$. (Any value of $\lambda$ strictly between 1 and $4/3$ will perform. Parameters $l$ and $\lambda$ are used to determine the number of iteration steps.)
        \item Repeat the following and interrupt it when the total number of executions of Grover's searching module is more than $22.5\sqrt{(N-1)!}$. (Grover's searching module can be seen in {\bf Fig.\ref{fig:top_design2}}. The constant coefficient 22.5 depends on $\lambda$.)
        \begin{enumerate}[a]
            \item Execute the encoding module shown in {\bf Fig.\ref{fig:top_design1}} to prepare the initial state.
            \item Apply the quantum exponential searching algorithm where Grover's iteration can be seen in {\bf Fig.\ref{fig:top_design2}}:
            \begin{enumerate}[i]
                \item Choose $j$ uniformly at random among the non-negative integers smaller than $l$.
                \item Apply $j$ iterations of Grover's searching module.
            \end{enumerate}
            \item Observe the measurement result. If the total weight $w_\sigma$ of the HC obtained in the output is less than the threshold, $w_\sigma$ is used to update the threshold. Set $l=min(\lambda l,\sqrt{(N-1)!})$ and return to step (a).
        \end{enumerate}
        
        \item Return the final measurement result.
    \end{enumerate}

    \hspace*{\fill}
    
    The probability finding the optimal solution for the above quantum algorithm is at least $1/2$ by computing the expected running time to find the minimum value, given by \cite{durr1999quantumalgorithmfindingminimum, Boyer_1998}. Therefore, we can run $c$ times to ensure a success probability of at least $1-(\frac{1}{2})^c$.

    {\bf Qubit consumption.}
        Reducing qubit consumption is of practical importance, both from a simulating point of view, and from a NISQ-implementable perspective. 
        In our proposed algorithm, the consumption of qubits has been reduced to the extreme because we only need index and value registers which are necessary to store the information required to solve the TSP. Note that if ancillary qubits in our HC-generation algorithm are more than $M$, we will need extra qubits. But if the weight of a certain edge is larger than $N$, we will need at least $\lceil\log N^2\rceil$ qubits for value registers to store the possible maximum total weight of HC, which is larger than $\lceil\log N\rceil$ required by HC-generation algorithm. So without loss of generality, we can always assume $M>\lceil\log N\rceil$. Finally, our algorithm needs $N\lceil\log N\rceil+M$ qubits in total.

    {\bf Gate complexity.}
        Gate complexity is closely related to the depth of quantum circuit and the running time of the algorithm, which is an important indicator. The total gate complexity of our HC-generation algorithm is $O(N^{5/2})$. Grover's diffusion operator inside the dashed box in {\bf Fig.\ref{fig:top_design2}} requires one $HCg$-gate and its inverse, $2N\log N$ $X$ gates, one $MCZ$ gate, therefore, its total gate complexity is $O(N^{5/2})$. $U_{\omega-C_{T}}$-gate in {\bf Fig.\ref{fig:top_design1}} and {\bf Fig.\ref{fig:top_design2}} can be executed by $O(N^2M)$ multi-qubit controlled phase gates. The gate complexity of $QFT$ is $O(M^2)$. Therefore, the gate complexity of our algorithm's encoding part and searching module in {\bf Fig.\ref{fig:top_design1}} and {\bf Fig.\ref{fig:top_design2}} are both $O(N^{5/2}+M+N^2M+M^2)=O(N^{5/2}+N^2M+M^2)$ ($M$ comes from $H^{\otimes M}$, $M>\lceil\log N\rceil$), and the total gate complexity of our whole algorithm is $O((N^{5/2}+N^2M+M^2)\sqrt{(N-1)!})$.

    {\bf Query complexity.}
        Owing to the varying complexity using different decomposition methods of gates and encoding methods of data, when comparing the complexity of algorithms we generally use query complexity, that is, the number of execution times of oracle marking the target states. In {\bf Table~\ref{tab:compare}}, we list the comparison results with excellent realizable quantum search algorithms for the TSP. We compare our algorithm with GAS(HCD) in the average sense. And $O(\sqrt{(N-1)!})$ is the theoretical minimum query complexity of quantum search algorithms for a general TSP because we have to find the shortest one among $(N-1)!$ HCs when we consider random unknown input instances. One may note that for sparse instances with very small $d$, GAS(HCD) has lower query complexity, while what we care about are instances on the general complete graphs, which contain situations of sparse graphs by setting weights on certain edges to 0.

        \hspace*{\fill}

        \begin{table}[H]
            \centering
            \begin{threeparttable}[b]
                \caption{Comparison Results on query complexity. $d$ is the maximum degree of the graph and $d\sim O(N)$ under the statistical significance of random instances.} \label{tab:compare}
                
                \setlength{\tabcolsep}{2mm}
                \begin{tabular}{c c}
                    \toprule[1.5pt]
                    algorithm    &  query complexity   \\ \hline
                    GAS(HCD)~\cite{zhu2022realizablegasbasedquantumalgorithm}  & $O(\sqrt{2^{\lceil\log d\rceil N}})\approx O(\sqrt{N^N})$     \\
                    TSQS(HOBO)~\cite{sato2024circuitdesigntwostepquantum}  &  $O(\sqrt{N!})$ \\
                    Our algorithm  & $O(\sqrt{(N-1)!})\approx O(\sqrt{N^N/e^N})$  \\
                    \bottomrule[1.5pt]
                \end{tabular}
                
            \end{threeparttable}
        \end{table}
        
        The query complexity of the first step of the TSQS(HOBO) algorithm (preparing the initial state) is $O(\sqrt{2^{N\log N}/N!})\approx O(e^{N/2}/N^{1/4})$ according to Sterling's approximation~\cite{sato2024circuitdesigntwostepquantum}. Our algorithm's step of preparing the initial state includes generating the uniform superposition state of all HCs and calculating the corresponding weights (see {\bf Fig.\ref{fig:top_design1}}), not using search strategy, therefore, it is not suitable to discuss query complexity. However, the gate complexity of the encoding part in our algorithm is $O(N^{5/2}+N^2M+M^2)$, which means that we can prepare the initial state within the polynomial gate complexity, achieving exponential acceleration. ($M$ is the number of qubits in the value registers, which can be determined based on the precision of the calculation and the range of weights that need to be stored.)

    Overall, our algorithm for the TSP performed well in terms of query complexity and accuracy. In addition, our algorithm has no black boxes and all quantum gates are simple, which makes it easier to implement on real quantum hardware.

\section*{Acknowledgments}
    This work was supported by the National Key R\&D Program of China (Grant No. 2023YFA1009403), the National Natural Science Foundation special project of China (Grant No.12341103) and National Natural Science Foundation of China (Grant No. 62372444).

\bibliographystyle{naturemag}
\bibliography{nature_template}

\pagebreak
\newpage
\onecolumngrid
\beginsupplement
\setcounter{figure}{0}
\renewcommand{\thefigure}{S\arabic{figure}}

\section*{Supplementary Numerical Results}

    \begin{figure}[H]
        \centering
        \includegraphics[width=1.\linewidth]{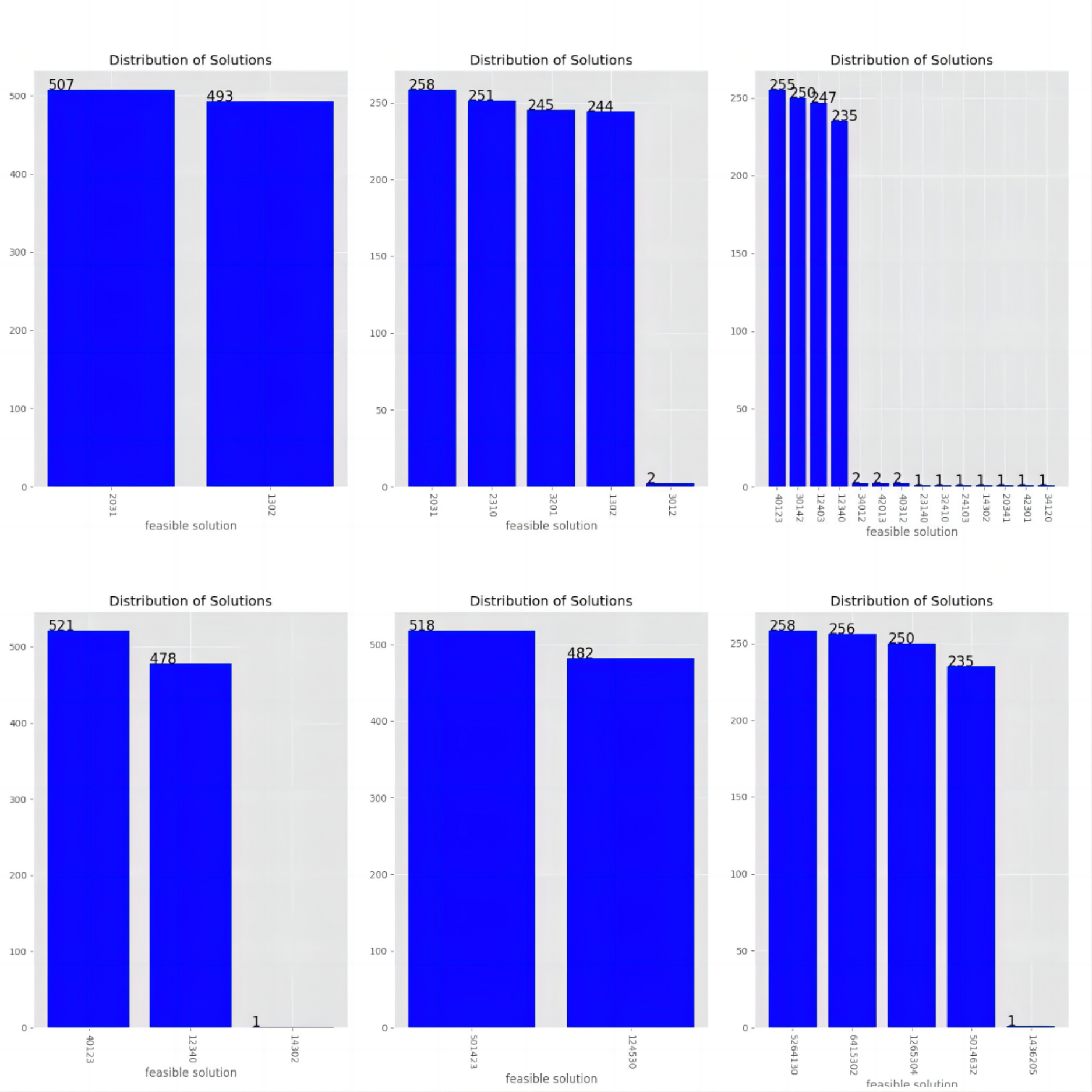}

        \caption{Simulation results of 1000 samples for each of six graphs, corresponding to $X_1 \sim X_6$ in {\bf Table~\ref{tab:sim}} from left to right and top to bottom. The bars with significant proportions are all optimal solutions. Each feasible solution is a permutation. For example, $\sigma=(2031)$ represents the route $0\to2\to3\to1\to0$.}
        \label{app:graph-results}
    \end{figure}

    \begin{figure}[H]
        \centering
        \includegraphics[width=1.\linewidth]{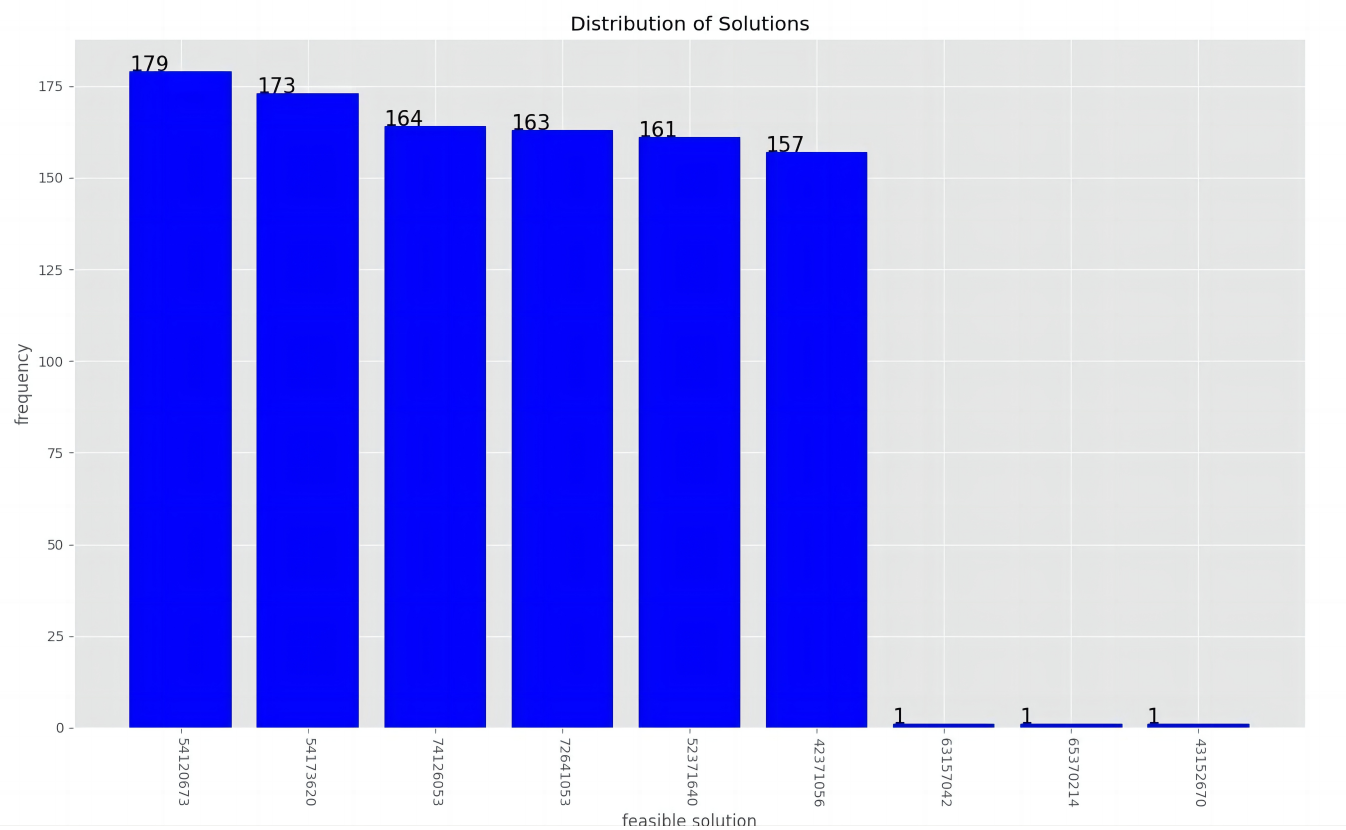}

        \caption{Simulation results of 1000 samples for eight-node TSP example, corresponding to $X_7$ in {\bf Table~\ref{tab:sim}}. The bars with significant proportions are all optimal solutions.}
        \label{app:graph-results1}
    \end{figure}

    \begin{figure}[H]
        \centering
        \includegraphics[width=1.\linewidth]{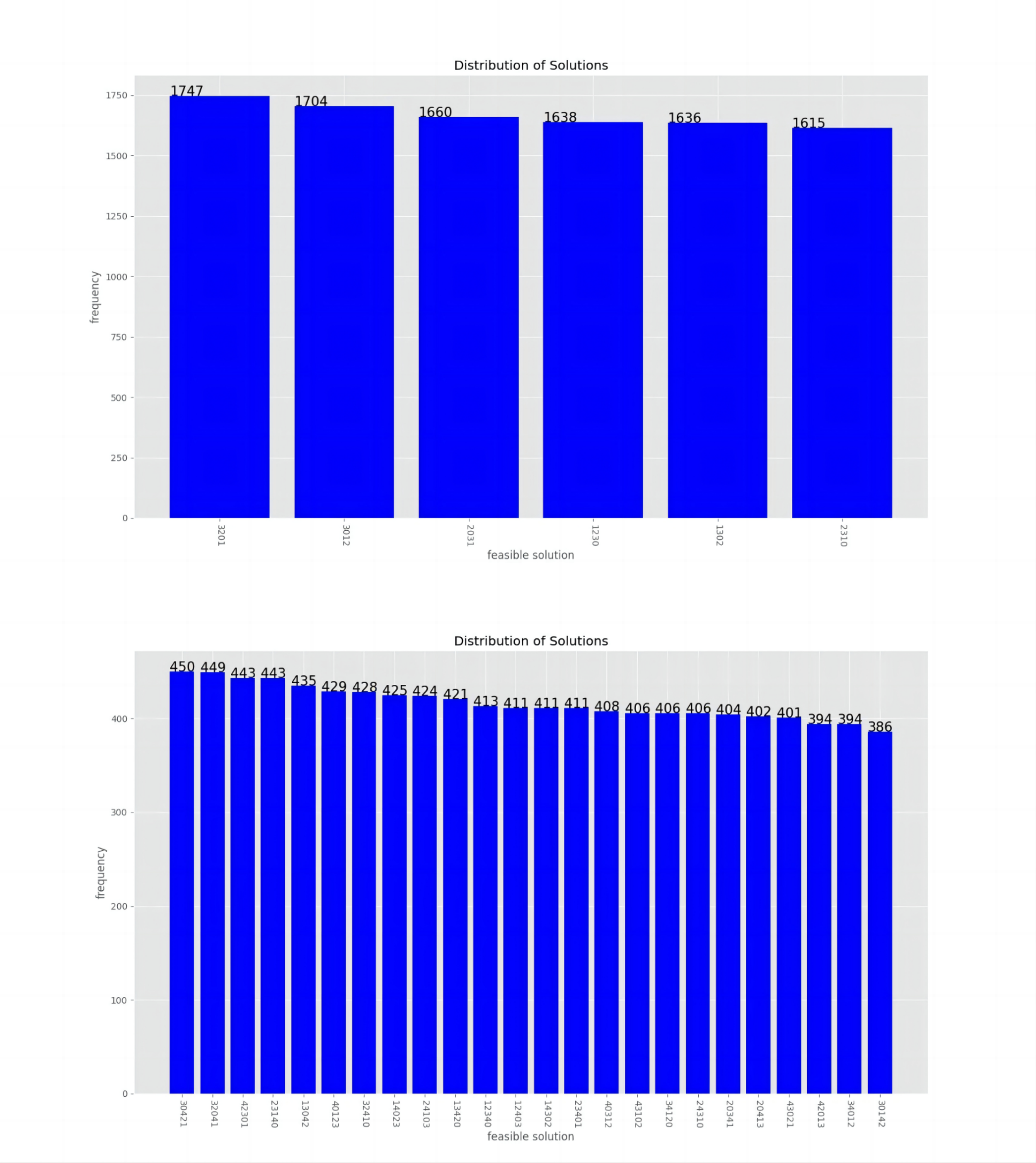}

        \caption{Simulation results of QDP HC-generation algorithm on 4-length and 5-length HC, where sampling size is 10000. Each non-zero frequency bar represents a Hamiltonian Cycle, which means there are no infeasible states.}
        \label{app:graph-results2}
    \end{figure}

\end{document}